\newtheorem{theorem}{Theorem}
\newtheorem{lemma}{Lemma}
\newtheorem{corollary}{Corollary}
\theoremstyle{definition}
\begin{document}
	\begin{center}
		\Large
	\textbf{Moments dynamics and stationary states for classical diffusion-type GKSL equations}
	
		\large 
		\textbf{D.D. Ivanov\footnote{Physics Faculty, Lomonosov Moscow State University, ul. Leninskyie gori 1, Moscow 119234, Russia \\ E-mail:\href{mailto:dan25032000@mail.ru}{dan25032000@mail.ru}}, A.E. Teretenkov}\footnote{Department of Mathematical Methods for Quantum Technologies, Steklov Mathematical Institute of Russian Academy of Sciences,
			ul. Gubkina 8, Moscow 119991, Russia\\ E-mail:\href{mailto:taemsu@mail.ru}{taemsu@mail.ru}}
		\end{center}
		
			\footnotesize
			The explicit dynamics of the moments for the GKSL equation and the approach in finding stationary Gaussian states are obtained. In our case the GKSL equation corresponds to Wiener stochastic processes. Such equations contain a double commutator which can be understood as a quantum analog of the second spatial derivative.

			\textit{AMS classification}: 81S22, 82C31, 81Q05, 81Q80
			
            \textit{Keywords}: GKSL equation, irreversible quantum dynamics, Wiener stochastic process, exact solution
			\normalsize

	\section{Introduction}

	This paper is another approach of a series \cite{Ter19, Teretenkov20, NosTer20, Linowski2021} devoted to quantum kinetics equations in which dynamics of moments is exactly calculable. In this paper we consider the equation for the density matrix
	\begin{equation}\label{eq:mainEq}
	\frac{d}{dt} \rho_t = \mathcal{L}(\rho_t), \qquad \mathcal{L}(\rho) = -\frac12 \sum_{j=1}^{N} [C_j,[C_j, \rho]], 
    \end{equation}
    where $ C_j=C_j^{\dagger} $ --- operators, quadratic in bosonic or fermionic creation and annihilation operators. These generators arise in the case of averaging with respect to classical Wiener processes or, more generally, fields \cite{Kossakowski1972, Kummerer87, Holevo96, Holevo98}. Hence, since $ \mathcal{L} $ has the \textit{ Gorini-Kossakowski-Sudarshan-Lindblad} (GKSL) form \cite{gorini1976completely, lindblad1976generators}, we shall refer to it as a classical diffusion-type GKSL generator. 
        
    The main results of this paper are formulated in theorems 1 and 3 where for density matrix, satisfying $\eqref{eq:mainEq}$, the explicit dynamics of moments of creation and annihilation operators of arbitrary order for bosonic and fermionic cases respectively was obtained. Another group of results is stated in theorems 2 and 4 which are devoted to Gaussian stationary states of equation $\eqref{eq:mainEq}$.
        
    \section{Bosonic case}
    \ 
    
    We need to succinctly present some notation from \cite{Ter19, Ter16, Ter17a, Ter19R} to formulate our result. In this section we consider Hilbert space $\otimes_{j=1}^n\ell_2$.  Let us define the $2n$-dimensional vector of annihilation and creation operators $\mathfrak{a} = (\hat{a}_1, \cdots, \hat{a}_n, \hat{a}_1^{\dagger}, \cdots, \hat{a}_n^{\dagger} )^T$ satisfying canonical commutation relations \cite[Sec. 1.1.2]{scalli2003} $ [\hat{a}_i, \hat{a}_j^{\dagger}] = \delta_{ij}$, $ [\hat{a}_i, \hat{a}_j] = [\hat{a}_i^{\dagger}, \hat{a}_j^{\dagger}]= 0 $. One could denote linear and quadratic forms in such operators by $ f^T \mathfrak{a}  $ and $ \mathfrak{a}^T K \mathfrak{a} $, respectively. Here $  f \in \mathbb{C}^{2n} $ and $ K \in \mathbb{C}^{2n \times 2n} $. Let us define $2n \times 2n$-dimensional matrices as
    \begin{equation*}
	J = \biggl(
	\begin{array}{cc}
		0 & -I_n \\ 
		I_n & 0
	\end{array} 
	\biggr), \qquad
	E = \biggl(
	\begin{array}{cc}
		0 & I_n \\ 
		I_n & 0
	\end{array} 
	\biggr),
    \end{equation*}
    where $ I_n $ --- identity matrix from $ \mathbb{C}^{n \times n} $.  We also define the $\sim$-conjugation of vectors and matrices by the formulae
    \begin{equation*}
	\tilde{g} = E\overline{g}, \; g \in \mathbb{C}^{2n}, \qquad \tilde{K} = E \overline{K} E, \; K \in \mathbb{C}^{2n \times 2n},
    \end{equation*}
    where the overline is an (elementwise) complex conjugation.
    
    First of all, let us formulate the main results of this section and then we shall prove them. 
    \begin{theorem}\label{th:mainBos}
	Let the density matrix $\rho_{t}$ satisfy equation \eqref{eq:mainEq}, where  $ C_{j} = \frac{1}{2}\mathfrak{a}^{T}K_{j}\mathfrak{a}$, $K_{j} = K_{j}^{T} = \tilde{K_{j}} \in \mathbb{C}^{2n \times 2n} $ and $\langle\otimes_{l = 1}^{m}\mathfrak{a}\rangle_{0} < \infty$, then the dynamics of the moments has the form:
	\begin{equation}\label{eq:mometsDynBos}
		\langle\otimes_{l = 1}^{m}\mathfrak{a}\rangle_{t}=\exp \biggl( -\frac{t}{2}\sum\limits_{j=1}^N \sum\limits_{i,p = 1}^m\otimes_{l = 1}^m(JK_{j})^{\delta_{il}+\delta_{pl}} \biggr) \langle\otimes_{l = 1}^{m}\mathfrak{a}\rangle_{0},
	\end{equation}
	where $ \langle \; \cdot \; \rangle_t \equiv \operatorname{tr} (  \; \cdot \; \rho_t)$. In particular:
	\begin{align*}
		\langle\mathfrak{a}\rangle_{t} &=\exp \biggl(-\frac{t}{2}\sum\limits_{j=1}^N (JK_{j})^{2} \biggr)\langle\mathfrak{a}\rangle_{0},\\
		\langle\mathfrak{a}\otimes\mathfrak{a}\rangle_{t} &= \exp \biggl(-\frac{t}{2}\sum\limits_{j=1}^N(I_{2n}\otimes(JK_{j})^{2} + 2(JK_{j})\otimes(JK_{j}) + (JK_{j})^{2}\otimes I_{2n}) \biggr)\langle\mathfrak{a}\otimes\mathfrak{a}\rangle_{0}.
	\end{align*}
    \end{theorem}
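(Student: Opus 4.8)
The plan is to pass to the Heisenberg picture and reduce the evolution of the moments to a finite, closed, constant-coefficient linear system. Writing $\langle A\rangle_t = \operatorname{tr}(A\rho_t)$ and differentiating under the trace, the adjoint relation $\operatorname{tr}(A\,\mathcal{L}(\rho)) = \operatorname{tr}(\mathcal{L}^*(A)\,\rho)$ together with cyclicity of the trace shows that the adjoint generator has exactly the same double-commutator form, $\mathcal{L}^*(A) = -\tfrac12\sum_{j}[C_j,[C_j,A]]$, so that $\tfrac{d}{dt}\langle A\rangle_t = \langle \mathcal{L}^*(A)\rangle_t$. It therefore suffices to compute $\mathcal{L}^*\!\bigl(\otimes_{l=1}^m\mathfrak{a}\bigr)$ and recognize it as a $c$-number matrix acting on $\otimes_{l=1}^m\mathfrak{a}$.

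The computational heart is the single commutator of a quadratic form with the vector of generators. Using the CCR, which in the vector notation read $[\mathfrak{a}_\alpha,\mathfrak{a}_\beta] = -J_{\alpha\beta}$, the Leibniz rule $[C,\mathfrak{a}_\mu\mathfrak{a}_\nu] = [C,\mathfrak{a}_\mu]\mathfrak{a}_\nu + \mathfrak{a}_\mu[C,\mathfrak{a}_\nu]$, and the symmetry $K_j = K_j^T$, I would verify the identity $[C_j,\mathfrak{a}] = JK_j\,\mathfrak{a}$. The crucial structural consequence is that commuting a quadratic $C_j$ with a degree-one object returns a degree-one object: on $\mathfrak{a}$ the commutator acts simply as multiplication by $JK_j$, hence it is degree-preserving on monomials. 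This is exactly what makes the hierarchy of $m$-th order moments close, so that no coupling to higher moments arises.

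Next I would lift this to the tensor power. Because $[C_j,\cdot]$ is a derivation, acting on the component $\mathfrak{a}_{\gamma_1}\cdots\mathfrak{a}_{\gamma_m}$ of $\otimes_l\mathfrak{a}$ it distributes over the factors, which in matrix form means $[C_j,\otimes_l\mathfrak{a}] = D_j\,(\otimes_l\mathfrak{a})$ with $D_j = \sum_{i=1}^m I^{\otimes(i-1)}\otimes JK_j\otimes I^{\otimes(m-i)}$. Since $D_j$ is a $c$-number matrix it commutes with the operation of taking a further commutator, so the double commutator iterates cleanly to $[C_j,[C_j,\otimes_l\mathfrak{a}]] = D_j^2\,(\otimes_l\mathfrak{a})$. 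Expanding $D_j^2 = \sum_{i,p=1}^m (JK_j)_i (JK_j)_p$ and separating the diagonal $i=p$ terms (which put $(JK_j)^2$ in slot $i$) from the off-diagonal $i\neq p$ terms (which put $JK_j$ in each of slots $i,p$) reproduces precisely $\sum_{i,p}\otimes_{l}(JK_j)^{\delta_{il}+\delta_{pl}}$, since the exponent $\delta_{il}+\delta_{pl}$ equals $2$ on the diagonal slot, $1$ on the two off-diagonal slots, and $0$ elsewhere. Summing over $j$ and collecting the factor $-\tfrac12$ yields the closed linear ODE $\tfrac{d}{dt}\langle\otimes_l\mathfrak{a}\rangle_t = -\tfrac{1}{2}\sum_j\sum_{i,p}\otimes_l(JK_j)^{\delta_{il}+\delta_{pl}}\langle\otimes_l\mathfrak{a}\rangle_t$ on $(\mathbb{C}^{2n})^{\otimes m}$, whose solution is the asserted matrix exponential; the two displayed special cases follow by setting $m=1,2$.

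I expect the only genuine subtleties to be, first, the bookkeeping in the expansion of $D_j^2$ — correctly matching the three possible values of $\delta_{il}+\delta_{pl}$ to the diagonal and off-diagonal contributions — and, second, the analytic justification that for these unbounded bosonic operators the finiteness hypothesis $\langle\otimes_{l=1}^m\mathfrak{a}\rangle_0 < \infty$ legitimates differentiating under the trace and keeps the moments finite. The latter is dispatched by the degree-preservation observation above: $\mathcal{L}^*$ leaves invariant the finite-dimensional space spanned by the degree-$m$ monomials, so the evolution stays within this space and the formal ODE is exact.
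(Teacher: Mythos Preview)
Your proposal is correct and follows essentially the same route as the paper: pass to the Heisenberg picture via the self-adjointness of the double-commutator generator, use the identity $[C_j,f^T\mathfrak{a}]=f^T(JK_j)\mathfrak{a}$ (equivalently $[C_j,\mathfrak{a}]=JK_j\,\mathfrak{a}$), extend by the derivation property to products/tensor powers, iterate once for the double commutator, and solve the resulting constant-coefficient linear ODE on $(\mathbb{C}^{2n})^{\otimes m}$. The only cosmetic difference is that the paper carries auxiliary vectors $f_l$ and works with $\prod_l f_l^T\mathfrak{a}$ before invoking their arbitrariness to pass to $\otimes_l\mathfrak{a}$, whereas you compute directly on the tensor power via $D_j=\sum_i I^{\otimes(i-1)}\otimes JK_j\otimes I^{\otimes(m-i)}$ and $D_j^2$; these are the same computation in two notations.
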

    
    The following theorem provides us with the conditions for the Gaussian state having the form
    \begin{equation}\label{eq:GaussStBos}
	\rho^{\rm (st)} = \exp \biggl(\frac12 \mathfrak{a}^{T} M \mathfrak{a} + s \biggr)
    \end{equation}
    to be stationary solution of equation \eqref{eq:mainEq}. Let us stress that according to \cite[Sec.~3.3]{Ter19R} $ M=M^T = \tilde{M} $, $ ME < 0 $, $ e^{s} = \sqrt{|\det (e^{MJ} - I)|} $.
    \begin{theorem}\label{th:statStatesBos}
	A density matrix $\rho^{\rm (st)}$ is a stationary solution of equation \eqref{eq:mainEq}, where  $ C_{j} = \frac{1}{2}\mathfrak{a}^{T}K_{j}\mathfrak{a}$, $K_{j} = K_{j}^{T} = \tilde{K_{j}} \in \mathbb{C}^{2n \times 2n} $, having the form \eqref{eq:GaussStBos} if and only if
	\begin{equation}\label{eq:statCondBos}
		[K_jJ , e^{MJ}] = 0
	\end{equation}
	for all $ j = 1, \ldots, N $.
    \end{theorem}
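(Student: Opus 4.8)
The plan is to establish the stronger equivalence that, for a Gaussian state of the form \eqref{eq:GaussStBos}, the dissipator annihilates it, $\mathcal{L}(\rho^{\rm (st)})=0$, if and only if it commutes with every generator, $[C_j,\rho^{\rm (st)}]=0$ for all $j$, and then to rewrite these operator identities as the matrix conditions \eqref{eq:statCondBos}. Throughout I would reuse the elementary relation $[C_j,\mathfrak{a}]=JK_j\mathfrak{a}$, which follows from the canonical commutation relations (and already underlies the first-moment formula in Theorem \ref{th:mainBos}).

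First I would carry out the reduction of the commutation condition to a matrix identity. Since $\log\rho^{\rm (st)}=\tfrac12\mathfrak{a}^{T}M\mathfrak{a}+s$ is quadratic, the same commutator formula gives $[\log\rho^{\rm (st)},\mathfrak{a}]=JM\mathfrak{a}$, so the adjoint action is the Bogoliubov transformation $\rho^{\rm (st)}\mathfrak{a}(\rho^{\rm (st)})^{-1}=e^{JM}\mathfrak{a}$. Conjugating the quadratic form $C_j$ componentwise then yields $\rho^{\rm (st)}C_j(\rho^{\rm (st)})^{-1}=\tfrac12\mathfrak{a}^{T}(e^{JM})^{T}K_j\,e^{JM}\mathfrak{a}=\tfrac12\mathfrak{a}^{T}e^{-MJ}K_j e^{JM}\mathfrak{a}$, where I used $(e^{JM})^{T}=e^{-MJ}$ (because $M^{T}=M$ and $J^{T}=-J$). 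Since the map $K\mapsto\tfrac12\mathfrak{a}^{T}K\mathfrak{a}$ is injective on symmetric matrices, $[C_j,\rho^{\rm (st)}]=0$ is equivalent to $K_j=e^{-MJ}K_j e^{JM}$, i.e. $e^{MJ}K_j=K_j e^{JM}$. Right-multiplying by $J$ and using the intertwining relation $e^{JM}J=Je^{MJ}$ (a consequence of $Je^{MJ}J^{-1}=e^{JM}$) turns this into $e^{MJ}(K_jJ)=(K_jJ)e^{MJ}$, which is precisely \eqref{eq:statCondBos}. All steps are reversible, so $[C_j,\rho^{\rm (st)}]=0\iff[K_jJ,e^{MJ}]=0$.

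It then remains to prove $\mathcal{L}(\rho^{\rm (st)})=0\iff[C_j,\rho^{\rm (st)}]=0$ for all $j$. The implication $\Leftarrow$ is immediate from the double-commutator form of $\mathcal{L}$. For $\Rightarrow$, the naive route---expanding $\mathcal{L}(\rho^{\rm (st)})=0$ into its quartic and quadratic parts and comparing symbols---fails, because $\rho^{\rm (st)}C_j(\rho^{\rm (st)})^{-1}$ is not Hermitian, so the would-be sum of squares $\sum_j\bigl(\tfrac12 z^{T}(K_j-e^{-MJ}K_je^{JM})z\bigr)^{2}$ has complex summands that can cancel without each vanishing; this is the main obstacle. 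I would instead exploit positivity on the Hilbert--Schmidt space of operators: since each $C_j$ is Hermitian, the superoperator $[C_j,\,\cdot\,]$ is self-adjoint for the inner product $\langle A,B\rangle=\operatorname{tr}(A^{\dagger}B)$, whence $\mathcal{L}$ is negative semidefinite. Concretely, pairing the stationarity equation with $\rho^{\rm (st)}$ and using cyclicity of the trace together with the anti-Hermiticity of $B_j:=[C_j,\rho^{\rm (st)}]$ gives
\[
0=\operatorname{tr}\!\bigl(\rho^{\rm (st)}\mathcal{L}(\rho^{\rm (st)})\bigr)=-\frac12\sum_{j=1}^{N}\operatorname{tr}\!\bigl(B_j^{\dagger}B_j\bigr)=-\frac12\sum_{j=1}^{N}\bigl\|[C_j,\rho^{\rm (st)}]\bigr\|_{\rm HS}^{2}.
\]
Since every summand is nonnegative, each $[C_j,\rho^{\rm (st)}]$ must vanish, which combined with the previous paragraph closes the argument.

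The only analytic point to be checked is that $B_j$ and $[C_j,B_j]$ are trace class, so that the traces above converge and cyclicity is legitimate; this holds because $\rho^{\rm (st)}=\exp(\tfrac12\mathfrak{a}^{T}M\mathfrak{a}+s)$ with $ME<0$ decays fast enough that multiplication by the (quadratic or quartic) polynomial operators preserves the trace class. I expect this verification, rather than the algebra, to be the part that requires care.
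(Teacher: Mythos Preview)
Your proposal is correct and follows essentially the same route as the paper: you pair $\mathcal{L}(\rho^{\rm (st)})=0$ with $\rho^{\rm (st)}$, use that $Z_j=[C_j,\rho^{\rm (st)}]$ satisfies $Z_j^{\dagger}=-Z_j$ to obtain $\sum_j\operatorname{tr}Z_j^{\dagger}Z_j=0$, conclude $[C_j,\rho^{\rm (st)}]=0$, and then translate this into the matrix condition via the conjugation formula $e^{\frac12\mathfrak{a}^{T}M\mathfrak{a}}\,\tfrac12\mathfrak{a}^{T}K\mathfrak{a}\,e^{-\frac12\mathfrak{a}^{T}M\mathfrak{a}}=\tfrac12\mathfrak{a}^{T}e^{-MJ}Ke^{JM}\mathfrak{a}$ (which the paper quotes as a lemma, and which you rederive from $[\tfrac12\mathfrak{a}^{T}M\mathfrak{a},\mathfrak{a}]=JM\mathfrak{a}$). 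Your write-up is in fact slightly more careful, since you make the reverse implication and the injectivity of $K\mapsto\tfrac12\mathfrak{a}^{T}K\mathfrak{a}$ on symmetric matrices explicit, and you flag the trace-class verification that the paper leaves tacit.
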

    In order to prove theorem \ref{th:mainBos} we need to split it in several lemmas. To begin with, let us formulate a particular case of lemma 1 from \cite{Ter16} which will be utilized further.
    \begin{lemma}\label{lem:commBoson}
	Let $K = K^{T} \in \mathbb{C}^{2n \times 2n} $ and $ f \in \mathbb{C}^{2n} $, then
	\begin{equation*}
		\biggl[\frac{1}{2}\mathfrak{a}^{T}K\mathfrak{a},f^{T}\mathfrak{a} \biggr] = f^{T}(JK)\mathfrak{a}.
	\end{equation*}
    \end{lemma}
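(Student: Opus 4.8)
The plan is to reduce the operator commutator to a purely combinatorial index computation governed by the canonical commutation relations. First I would record the commutation relations of the components $\mathfrak{a}_\alpha$ (for $\alpha = 1, \ldots, 2n$) in matrix form. A direct inspection of the four $n \times n$ blocks, using $[\hat{a}_i, \hat{a}_j^{\dagger}] = \delta_{ij}$ and $[\hat{a}_i, \hat{a}_j] = [\hat{a}_i^{\dagger}, \hat{a}_j^{\dagger}] = 0$, shows that $[\mathfrak{a}_\alpha, \mathfrak{a}_\beta] = -J_{\alpha\beta}$; that is, the scalar commutator matrix equals $-J$. This single identity encodes all the CCR and is the only input about the operators that the remainder of the argument needs.

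Next I would expand both the quadratic and the linear form in components, writing $\frac12 \mathfrak{a}^{T} K \mathfrak{a} = \frac12 \sum_{\alpha,\beta} K_{\alpha\beta}\,\mathfrak{a}_\alpha \mathfrak{a}_\beta$ and $f^{T} \mathfrak{a} = \sum_\gamma f_\gamma \mathfrak{a}_\gamma$, so that the commutator becomes $\frac12 \sum_{\alpha,\beta,\gamma} K_{\alpha\beta} f_\gamma [\mathfrak{a}_\alpha \mathfrak{a}_\beta, \mathfrak{a}_\gamma]$. Applying the Leibniz identity $[AB,C] = A[B,C] + [A,C]B$ turns each triple commutator into $\mathfrak{a}_\alpha [\mathfrak{a}_\beta, \mathfrak{a}_\gamma] + [\mathfrak{a}_\alpha, \mathfrak{a}_\gamma]\,\mathfrak{a}_\beta$, and since the surviving single commutators are c-numbers equal to $-J_{\beta\gamma}$ and $-J_{\alpha\gamma}$, the expression is already linear in $\mathfrak{a}$ with no operator-ordering ambiguity left.

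Finally I would contract the free indices back into matrix products. Recognizing the summations over $\beta,\gamma$ in the first term and over $\alpha,\gamma$ in the second as matrix multiplications gives $-\tfrac12 (KJf)^{T}\mathfrak{a}$ from each; here the symmetry $K = K^{T}$ is precisely what forces the two contributions (which a priori involve $KJf$ and $K^{T}Jf$) to coincide. Adding them yields $-(KJf)^{T}\mathfrak{a}$, and transposing via $J^{T} = -J$ together with $K^{T} = K$ converts this into $f^{T}(JK)\mathfrak{a}$, which is the claim.

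I expect the only genuine pitfall to be sign bookkeeping: establishing that the antisymmetric commutator matrix is $-J$ rather than $+J$, and then correctly combining $J^{T} = -J$ with the symmetry of $K$ in the closing transposition. No convergence or domain questions arise, since the whole statement is a finite algebraic identity among polynomials in the $\mathfrak{a}_\alpha$; the hypothesis $K = K^{T}$ is the one substantive assumption, and it is used exactly once, in merging the two Leibniz terms.
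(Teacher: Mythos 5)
Your proof is correct. Note that the paper itself gives no proof of this lemma: it is imported verbatim as a special case of Lemma~1 of the cited reference [Ter16], so your component-wise computation supplies the argument the paper omits. All the delicate points check out: with the paper's convention $J=\bigl(\begin{smallmatrix}0 & -I_n\\ I_n & 0\end{smallmatrix}\bigr)$ the commutator matrix is indeed $[\mathfrak{a}_\alpha,\mathfrak{a}_\beta]=-J_{\alpha\beta}$, the Leibniz expansion produces the two terms $-\tfrac12(KJf)^T\mathfrak{a}$ and $-\tfrac12(K^TJf)^T\mathfrak{a}$ which merge by $K=K^T$, and the final transposition using $J^T=-J$ gives $-(KJf)^T\mathfrak{a}=f^T(JK)\mathfrak{a}$ as claimed.
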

    
    We will utilize following multiplications of linear forms which will be arranged from left to right:
    \begin{equation*}
	\prod\limits_{l = 1}^m f_{l}^{T}\mathfrak{a} \equiv f_{1}^{T}\mathfrak{a} \ldots f_{m}^{T}\mathfrak{a},
    \end{equation*}
    where $ f_l \in \mathbb{C}^{2n} $ for all $ l = 1, \ldots, m $.
    \begin{lemma}\label{lem:commWithSevOpBoson}
	Let $K = K^{T} \in \mathbb{C}^{2n \times 2n} $ and $ f_l \in \mathbb{C}^{2n} $ for all $ l = 1, \ldots, m $, then
	\begin{equation}\label{eq:commSeveralQuadFormsBoson}
		 \left[\frac{1}{2}\mathfrak{a}^{T}K\mathfrak{a},\prod\limits_{l = 1}^mf_{l}^{T}\mathfrak{a}\right] = \sum\limits_{i = 1}^m\prod\limits_{l = 1}^mf_{l}^{T}(JK)^{\delta_{il}}\mathfrak{a}.
	\end{equation}
    \end{lemma}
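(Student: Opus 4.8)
The plan is to prove the identity by induction on $m$, with the engine being the fact that commutation with a fixed operator is a derivation. Writing $A = \frac{1}{2}\mathfrak{a}^{T}K\mathfrak{a}$, the map $[A,\,\cdot\,]$ obeys the Leibniz rule $[A,BC] = [A,B]C + B[A,C]$, and the crucial structural input is Lemma \ref{lem:commBoson}, which tells us that commuting $A$ with a linear form $f^{T}\mathfrak{a}$ produces \emph{again} a linear form, namely $f^{T}(JK)\mathfrak{a}$. This is exactly what keeps us inside the class of products of linear forms throughout the computation.

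For the base case $m=1$ the right-hand side reduces to $f_{1}^{T}(JK)^{\delta_{11}}\mathfrak{a} = f_{1}^{T}(JK)\mathfrak{a}$, which is precisely the content of Lemma \ref{lem:commBoson}. For the inductive step I would split off the last factor, writing $\prod_{l=1}^{m}f_{l}^{T}\mathfrak{a} = \left(\prod_{l=1}^{m-1}f_{l}^{T}\mathfrak{a}\right)\!\left(f_{m}^{T}\mathfrak{a}\right)$, and apply the Leibniz rule with $B = \prod_{l=1}^{m-1}f_{l}^{T}\mathfrak{a}$ and $C = f_{m}^{T}\mathfrak{a}$. The term $B\,[A,C]$ is evaluated by Lemma \ref{lem:commBoson} and yields $\left(\prod_{l=1}^{m-1}f_{l}^{T}\mathfrak{a}\right) f_{m}^{T}(JK)\mathfrak{a}$, which is exactly the $i=m$ summand on the right-hand side, since every factor with $l<m$ carries $\delta_{im}=0$ and hence an identity, while the last carries a single factor of $JK$. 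The term $[A,B]\,C$ is handled by the induction hypothesis applied to the first $m-1$ factors, producing $\sum_{i=1}^{m-1}\left(\prod_{l=1}^{m-1}f_{l}^{T}(JK)^{\delta_{il}}\mathfrak{a}\right) f_{m}^{T}\mathfrak{a}$; since $\delta_{im}=0$ for each $i\le m-1$, the trailing factor $f_{m}^{T}\mathfrak{a}$ equals $f_{m}^{T}(JK)^{\delta_{im}}\mathfrak{a}$, so these are the summands $i=1,\dots,m-1$. Adding the two contributions reassembles $\sum_{i=1}^{m}\prod_{l=1}^{m}f_{l}^{T}(JK)^{\delta_{il}}\mathfrak{a}$, completing the induction.

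I do not expect any genuine analytic difficulty here; the entire content is combinatorial bookkeeping, and the one point that requires care is matching the Leibniz expansion to the compact notation $(JK)^{\delta_{il}}$. The reading to keep in mind is that $(JK)^{\delta_{il}}$ equals the identity $I_{2n}$ whenever $l\neq i$ and equals $JK$ when $l=i$, so each summand indexed by $i$ is simply the original product of linear forms with a single insertion of $JK$ into its $i$-th factor. Once this is understood, the identity is just the statement that a derivation acts on a product by differentiating one factor at a time, with Lemma \ref{lem:commBoson} prescribing the effect of that single-factor differentiation.
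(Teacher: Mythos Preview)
Your proof is correct and follows exactly the same route as the paper: induction on $m$, with the base case given by Lemma~\ref{lem:commBoson} and the inductive step obtained by splitting off the last factor and applying the Leibniz rule $[A,BC]=[A,B]C+B[A,C]$. The only cosmetic difference is that the paper writes the step as $m\to m+1$ while you write it as $m-1\to m$.
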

    
    \begin{proof}
	It is convenient to prove this lemma by induction. The base case $ m=1 $ is right due to \ref{lem:commBoson}. Let \eqref{eq:commSeveralQuadFormsBoson} be right for $ m $, then let us prove that \eqref{eq:commSeveralQuadFormsBoson} is right for $ m+1 $. Evidently,
	\begin{align*}
		 \biggl[\frac{1}{2}\mathfrak{a}^{T}K\mathfrak{a},\prod\limits_{l = 1}^{m+1}f_{l}^{T}\mathfrak{a} \biggr] = \biggl[\frac{1}{2}\mathfrak{a}^{T}K\mathfrak{a},\prod\limits_{l = 1}^m f_{l}^{T}\mathfrak{a} \biggr]f_{m+1}^{T}\mathfrak{a} + \prod\limits_{l = 1}^m f_{l}^{T}\mathfrak{a} \biggl[\frac{1}{2}\mathfrak{a}^{T}K\mathfrak{a},f_{m+1}^{T}\mathfrak{a} \biggr] \\
		 = \sum\limits_{i = 1}^m \prod\limits_{l = 1}^m f_{l}^{T}  (JK)^{\delta_{il}}\mathfrak{a}f_{m+1}^{T}\mathfrak{a} + \prod\limits_{l = 1}^m f_{l}^{T}\mathfrak{a}f_{m+1}^{T}(JK)\mathfrak{a} = \sum\limits_{i = 1}^{m+1}\prod\limits_{l = 1}^{m+1}f_{l}^{T}(JK)^{\delta_{il}}\mathfrak{a}.
	\end{align*}
    \end{proof}

    \begin{corollary}
	\label{cor:doubleComm}
	Let $K = K^{T} \in \mathbb{C}^{2n \times 2n} $ and $ f_l \in \mathbb{C}^{2n} $ for all $ l = 1, \ldots, m $, then
	\begin{equation*}
		\left[\frac{1}{2}\mathfrak{a}^{T}K\mathfrak{a},\left[\frac{1}{2}\mathfrak{a}^{T}K\mathfrak{a},\prod\limits_{l = 1}^mf_{l}^{T}\mathfrak{a}\right] \right] = \sum\limits_{p = 1}^m\sum\limits_{i = 1}^m\prod\limits_{l = 1}^mf_{l}^{T}(JK)^{\delta_{il}+\delta_{pl}}\mathfrak{a}.
	\end{equation*}
    \end{corollary}
    
    \begin{proof}
	Applying lemma \ref{lem:commWithSevOpBoson} twice, we have
	\begin{equation*}
		\biggl[\frac{1}{2}\mathfrak{a}^{T}K\mathfrak{a},\biggl[\frac{1}{2}\mathfrak{a}^{T}K\mathfrak{a},\prod\limits_{l = 1}^mf_{l}^{T}\mathfrak{a}\biggr]\biggr] = \biggl[\frac{1}{2}\mathfrak{a}^{T}K\mathfrak{a}, \sum\limits_{i = 1}^m\prod\limits_{l = 1}^mf_{l}^{T}(JK)^{\delta_{il}}\mathfrak{a}\biggr] = \sum\limits_{p = 1}^m\sum\limits_{i = 1}^m\prod\limits_{l = 1}^m f_{l}^{T}(JK)^{\delta_{il}+\delta_{pl}}\mathfrak{a}.
	\end{equation*}
    \end{proof}
    
    \begin{lemma}
	\label{eq:conjGen}
	If $ \mathcal{L} $ is defined by formula \eqref{eq:mainEq}, $ \rho $ is a density matrix, $  X   $ is an operator such as $ \operatorname{tr} X  C_{j}^{2}\rho < \infty $, $ \operatorname{tr} X  \rho C_{j}^{2} < \infty $, $ \operatorname{tr} X  C_{j} \rho C_{j} < \infty $, then
	\begin{equation*}
		\operatorname{tr}  X \mathcal{L}(\rho) = \operatorname{tr} \mathcal{L}  (X)\rho.
	\end{equation*}
    \end{lemma}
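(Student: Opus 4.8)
The plan is to exploit the fact that the generator \eqref{eq:mainEq} is assembled from the Hermitian operators $C_j = C_j^{\dagger}$ through a \emph{double} commutator, which is formally self-dual with respect to the trace pairing $\operatorname{tr}(X\,\cdot\,)$. The essential point is purely algebraic: transporting a double commutator across the trace moves its action from $\rho$ onto $X$ without any change of sign, and the finiteness hypotheses are there only to make this transport analytically legitimate for the unbounded $C_j$.

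First I would expand the double commutator inside $\mathcal{L}(\rho)$ into its manifestly symmetric form
\begin{equation*}
[C_j,[C_j,\rho]] = C_j^2 \rho - 2 C_j \rho C_j + \rho C_j^2,
\end{equation*}
so that
\begin{equation*}
\operatorname{tr} X \mathcal{L}(\rho) = -\frac12 \sum_{j=1}^N \operatorname{tr}\bigl( X C_j^2 \rho - 2\, X C_j \rho C_j + X \rho C_j^2 \bigr).
\end{equation*}
The three hypotheses $\operatorname{tr} X C_j^2 \rho < \infty$, $\operatorname{tr} X C_j \rho C_j < \infty$, $\operatorname{tr} X \rho C_j^2 < \infty$ are exactly what guarantees that this splitting into three separately finite traces is valid.

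Next I would apply the cyclic property of the trace to each of the three summands, moving the $C_j$ factors standing to the right of $\rho$ around to the left so that $\rho$ becomes the rightmost factor in every term: $\operatorname{tr} X C_j \rho C_j = \operatorname{tr} C_j X C_j \rho$ and $\operatorname{tr} X \rho C_j^2 = \operatorname{tr} C_j^2 X \rho$. Collecting the rearranged terms, I would then recognize the operator now acting to the left of $\rho$,
\begin{equation*}
X C_j^2 - 2 C_j X C_j + C_j^2 X = [C_j,[C_j,X]],
\end{equation*}
which after summation over $j$ and multiplication by $-\tfrac12$ is precisely $\mathcal{L}(X)$; hence $\operatorname{tr} X \mathcal{L}(\rho) = \operatorname{tr}\mathcal{L}(X)\rho$.

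Since the algebra is elementary, the only genuine obstacle is analytic rather than combinatorial: one must ensure that, for unbounded quadratic $C_j$, both the termwise splitting of the trace of the expanded double commutator and the individual cyclic permutations are admissible. This is the sole role of the stated finiteness conditions, which bound each of the three relevant operator products in trace norm and thereby legitimize the otherwise formal manipulations.
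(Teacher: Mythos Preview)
Your proof is correct and follows essentially the same route as the paper: expand the double commutator as $C_j^2\rho - 2C_j\rho C_j + \rho C_j^2$, use cyclicity of the trace under the stated finiteness assumptions, and reassemble the result as $[C_j,[C_j,X]]$ acting to the left of $\rho$. The paper's own proof is in fact terser than yours, merely writing the chain of equalities without spelling out the cyclic permutations or the role of the finiteness hypotheses.
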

    
    \begin{proof} 
    	From direct calculation we have
    	\begin{align*}
    		\operatorname{tr}  X \mathcal{L}(\rho) &=-\operatorname{tr} (X\frac{1}{2}\sum\limits_{j}[C_{j},[C_{j},\rho]]) = -\frac{1}{2}\sum\limits_{j}  \operatorname{tr} (X(C_{j}^{2}\rho - 2C_{j}\rho C_{j} + \rho C_{j}^{2}) \\
    		&= -\operatorname{tr} (\rho\frac{1}{2}\sum\limits_{j}[C_{j},[C_{j},X]]) = \operatorname{tr} \mathcal{L}  (X)\rho.
    	\end{align*}
    \end{proof}
    
    \begin{proof}[Proof of theorem \ref{th:mainBos}]
	Directly from lemma \ref{eq:conjGen} we have
	\begin{equation}\label{eq:averDynamics}
		\frac{d}{dt}\langle X \rangle_t = -  \frac{1}{2}\sum\limits_{j}\langle[C_{j},[C_{j},X ]])  \rangle_t.
	\end{equation}
	Let us take into account that $ C_{j} = \frac{1}{2}\mathfrak{a}^{T}K_{j}\mathfrak{a}$, where $K_{j} = K_{j}^{T} = \tilde{K_{j}} \in \mathbb{C}^{2n \times 2n} $, which provides $ C_{j}^{\dagger} = C_{j}  $, and let $ X = \prod\limits_{l = 1}^m f_{l}^{T}\mathfrak{a} $, then due to corollary \ref{cor:doubleComm} equation \eqref{eq:averDynamics} takes the form
	\begin{equation*}
		\frac{d}{dt}\langle  \prod\limits_{l = 1}^m f_{l}^{T}\mathfrak{a}  \rangle_t = -\frac12 \sum\limits_{p = 1}^m\sum\limits_{i = 1}^m\langle \prod\limits_{l = 1}^mf_{l}^{T}(JK)^{\delta_{il}+\delta_{pl}}\mathfrak{a} \rangle_t,
	\end{equation*}
	that due to arbitrariness of $ f_l $ is equivalent to
	\begin{equation*}
		 \frac{d}{dt} \langle\otimes_{l = 1}^{m}\mathfrak{a}\rangle_{t} = -\frac{1}{2}\sum\limits_{j}\sum\limits_{i,p = 1}^m(\otimes_{l = 1}^m(JK_{j})^{\delta_{il}+\delta_{pl}} )\langle\otimes_{l = 1}^{m}\mathfrak{a}\rangle_{t}.
	\end{equation*}
    Solution of this equation takes form \eqref{eq:mometsDynBos}.
    \end{proof}
    To prove theorem \ref{th:statStatesBos} we require a special case of lemma 2 from \cite{Ter16}, which can be formulated in the following way.
    
    \begin{lemma}\label{lem:expSand}
	Let $ K =  K^T , M = M^T \in \mathbb{C}^{2n \times 2n} $, then
	\begin{equation*}
		e^{\frac12 \mathfrak{a}^{T} M \mathfrak{a} } \frac12 \mathfrak{a}^{T} K \mathfrak{a}  e^{-\frac12 \mathfrak{a}^{T} M \mathfrak{a} } =  \frac12 \mathfrak{a}^{T} e^{- MJ} K e^{J M}\mathfrak{a}.
	\end{equation*}
    \end{lemma}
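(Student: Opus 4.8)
The plan is to reduce the conjugation of the quadratic form to the conjugation of the individual components of $\mathfrak{a}$, exploiting that conjugation by an invertible operator is an algebra automorphism and hence multiplicative.

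First I would determine how the vector $\mathfrak{a}$ itself transforms. Applying the Hadamard expansion $e^{A}Be^{-A}=\sum_{k\ge 0}\frac{1}{k!}\,(\operatorname{ad}_A)^k(B)$ with $A=\frac12\mathfrak{a}^T M\mathfrak{a}$ and $B=f^T\mathfrak{a}$ a linear form, Lemma \ref{lem:commBoson} (with $K$ replaced by $M$) gives $[\frac12\mathfrak{a}^T M\mathfrak{a},\,f^T\mathfrak{a}]=f^T(JM)\mathfrak{a}$. Iterating, $(\operatorname{ad}_{\frac12\mathfrak{a}^T M\mathfrak{a}})^k(f^T\mathfrak{a})=f^T(JM)^k\mathfrak{a}$, so summing the series yields
\[
e^{\frac12\mathfrak{a}^T M\mathfrak{a}}\,f^T\mathfrak{a}\,e^{-\frac12\mathfrak{a}^T M\mathfrak{a}}=f^T e^{JM}\mathfrak{a}.
\]
Letting $f$ run over the standard basis vectors shows that conjugation acts componentwise as $\mathfrak{a}\mapsto e^{JM}\mathfrak{a}$.

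Next I would use multiplicativity of conjugation. Writing $\frac12\mathfrak{a}^T K\mathfrak{a}=\frac12\sum_{i,p}K_{ip}\,\mathfrak{a}_i\mathfrak{a}_p$ and inserting $e^{-\frac12\mathfrak{a}^T M\mathfrak{a}}e^{\frac12\mathfrak{a}^T M\mathfrak{a}}=I$ between the two factors, each $\mathfrak{a}_i$ is replaced by its conjugate $(e^{JM}\mathfrak{a})_i$ while the left--right ordering is preserved. Hence the conjugated form equals $\frac12(e^{JM}\mathfrak{a})^T K(e^{JM}\mathfrak{a})=\frac12\mathfrak{a}^T (e^{JM})^T K\,e^{JM}\mathfrak{a}$. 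Finally I would simplify the transpose using $M^T=M$ and $J^T=-J$: since $(JM)^T=M^T J^T=-MJ$, one has $(e^{JM})^T=e^{-MJ}$, giving exactly $\frac12\mathfrak{a}^T e^{-MJ}K\,e^{JM}\mathfrak{a}$, as claimed.

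The main obstacle I anticipate is analytic rather than algebraic: the operators act on $\otimes_{j=1}^n\ell_2$ and are unbounded, so the Hadamard series and the exponentials require a domain on which these manipulations are legitimate (for instance a dense set of analytic vectors). Note, however, that the series $\sum_k\frac{1}{k!}(JM)^k$ is a convergent series of finite $2n\times 2n$ matrices, so the transformation law for $\mathfrak{a}$ is well defined; the commutator identities themselves are finite algebraic statements and present no difficulty.
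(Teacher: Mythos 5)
Your argument is correct, and the sign bookkeeping works out: Lemma \ref{lem:commBoson} with $M$ in place of $K$ gives $\operatorname{ad}_{\frac12\mathfrak{a}^TM\mathfrak{a}}(f^T\mathfrak{a})=f^T(JM)\mathfrak{a}$, iterating and summing the Hadamard series gives $\mathfrak{a}\mapsto e^{JM}\mathfrak{a}$, and then multiplicativity of conjugation together with $(e^{JM})^T=e^{-MJ}$ (from $M^T=M$, $J^T=-J$) yields exactly $\frac12\mathfrak{a}^Te^{-MJ}Ke^{JM}\mathfrak{a}$. The comparison with the paper is somewhat one-sided here: the paper does not prove this lemma at all, but imports it as a special case of Lemma 2 of \cite{Ter16}, so your derivation is a self-contained replacement for that citation rather than an alternative to an in-text proof; the route you take (Hadamard expansion on linear forms, then the automorphism property to lift to quadratic forms) is the standard one and is essentially how such conjugation formulas are established in \cite{Ter16}. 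Your closing caveat is the right one to flag: the identity is purely algebraic once one fixes a dense invariant domain of analytic vectors on $\otimes_{j=1}^n\ell_2$ on which $e^{\pm\frac12\mathfrak{a}^TM\mathfrak{a}}$ and the series manipulations make sense (the paper works at this formal level throughout, and in the application to Theorem \ref{th:statStatesBos} the condition $ME<0$ ensures $\rho^{\rm (st)}$ is a genuine trace-class Gaussian operator), so no gap remains beyond what the paper itself leaves implicit.
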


    \begin{proof}[Proof of theorem \ref{th:statStatesBos}]
	Stationary solution of equation \eqref{eq:mainEq} satisfies the equation
	\begin{equation*}
		\sum\limits_{j}[C_{j},[C_{j},\rho^{\rm (st)}]] = 0,
	\end{equation*}
	then we have
	\begin{equation*}
		0 = \operatorname{tr} \rho^{\rm (st)}\sum\limits_{j}[C_{j},[C_{j},\rho^{\rm (st)}]] =\operatorname{tr} \sum\limits_{j}[\rho^{\rm (st)},C_{j}][C_{j},\rho^{\rm (st)}]= \sum_j \operatorname{tr} Z_j^{\dagger} Z_j,
	\end{equation*}
	where $ Z_j = [C_{j},\rho^{\rm (st)}] $. Considering $ \operatorname{tr} Z_j^{\dagger} Z_j \geqslant 0 $ for arbitrary $ Z_j $, in our case we deduce that $ Z_j = 0 $, i.e. $ [C_{j},\rho^{\rm (st)}] = 0 $. This implies
	\begin{equation*}
	C_{j}= \rho^{\rm (st)}  C_{j} (\rho^{\rm (st)})^{-1}.
	\end{equation*}
    
    Considering \ref{lem:expSand} these conditions are equivalent to $ e^{- MJ} K_j e^{J M} = K_j $. And from $ e^{J M} J = J e^{M J} $ finally we have \eqref{eq:statCondBos}.
    \end{proof}
    
    \section{Fermionic case}
    \ 
    
    In the fermionic case we also need a bit of notation that originates from \cite{Ter19, Ter19R, Ter17}. There we consider the finite-dimensional Hilbert space $\mathbb{C}^{2^n}$. In such a space one could define \cite[p. 407]{Takht11} $n$ pairs of fermionic creation and annihilation operators satisfying canonical anticommutation relations: $ \{\hat{c}_i^{\dagger}, \hat{c}_j \} = \delta_{ij},  \{\hat{c}_i, \hat{c}_j\} = 0 $.  Let us define the $2n$-dimensional vector $\mathfrak{c} = (\hat{c}_1, \ldots, \hat{c}_n, \hat{c}_1^{\dagger}, \ldots, \hat{c}_n^{\dagger})^T$ of creation and annihilation operators. The quadratic forms in such operators we denote by $ \mathfrak{c}^T K \mathfrak{c} $, $ K \in \mathbb{C}^{2n \times 2n} $. Matrix $ E $ and $\sim$-conjugation are defined in the same way as they were in the previous section. 
    
    Since the proofs of fermionic analogs of theorems \ref{th:mainBos} and \ref{th:statStatesBos} are almost identical to those of the bosonic case, we will state only their formulations.
    \begin{theorem}\label{th:mainFer}
    Let density matrix $\rho_{t}$ satisfy equation \eqref{eq:mainEq}, where  $ C_{j} = \frac{1}{2}\mathfrak{c}^{T}K_{j}\mathfrak{c}$, $K_{j} = -K_{j}^{T} =- \tilde{K_{j}} \in \mathbb{C}^{2n \times 2n} $, then the dynamics of the moments takes the form:
 	\begin{equation}\label{eq:mometsDynFer}
 		\langle\otimes_{l = 1}^{m}\mathfrak{c}\rangle_{t}=\exp \biggl( -\frac{t}{2}\sum\limits_{j=1}^N \sum\limits_{i,p = 1}^m\otimes_{l = 1}^m(EK_{j})^{\delta_{il}+\delta_{pl}} \biggr) \langle\otimes_{l = 1}^{m}\mathfrak{c}\rangle_{0},
 	\end{equation}
 	where $ \langle \; \cdot \; \rangle_t \equiv \operatorname{tr} (  \; \cdot \; \rho_t)$. In particular:
 	\begin{align*}
 		\langle\mathfrak{c}\rangle_{t} &=\exp \biggl(-\frac{t}{2}\sum\limits_{j=1}^N (EK_{j})^{2} \biggr)\langle\mathfrak{c}\rangle_{0},\\
 		\langle\mathfrak{c}\otimes\mathfrak{c}\rangle_{t} &= \exp \biggl(-\frac{t}{2}\sum\limits_{j=1}^N(I_{2n}\otimes(E K_{j})^{2} + 2(E K_{j})\otimes(E K_{j}) + (E K_{j})^{2}\otimes I_{2n}) \biggr)\langle\mathfrak{c}\otimes\mathfrak{c}\rangle_{0}.
 	\end{align*}
    \end{theorem}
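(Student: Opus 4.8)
The plan is to run the bosonic argument essentially verbatim, with the symplectic matrix $J$ replaced by $E$ and the symmetry $K_j=K_j^T$ replaced by the antisymmetry $K_j=-K_j^T$. Every step of the bosonic proof (Lemmas \ref{lem:commBoson}, \ref{lem:commWithSevOpBoson}, Corollary \ref{cor:doubleComm}, Lemma \ref{eq:conjGen}, and the concluding ODE argument) has a fermionic counterpart, so the real work is confined to the single step that uses the algebra of the operators, namely the elementary commutator lemma, which must now be proved from the canonical anticommutation relations $\{\mathfrak{c}_a,\mathfrak{c}_b\}=E_{ab}$ rather than from commutation relations.

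First I would establish the fermionic analog of Lemma \ref{lem:commBoson}. Writing $\frac12\mathfrak{c}^T K\mathfrak{c}=\frac12\sum_{a,b}K_{ab}\mathfrak{c}_a\mathfrak{c}_b$ and using the operator identity $[AB,C]=A\{B,C\}-\{A,C\}B$ together with $\{\mathfrak{c}_a,\mathfrak{c}_b\}=E_{ab}$, a short computation gives $[\frac12\mathfrak{c}^T K\mathfrak{c},f^T\mathfrak{c}]=(KEf)^T\mathfrak{c}$; using $E^T=E$ and $K^T=-K$ this becomes $-f^T(EK)\mathfrak{c}$. The overall minus sign is genuinely present --- it is already visible for $n=1$, where $\frac12\mathfrak{c}^T K\mathfrak{c}=\tfrac\kappa2-\kappa\,\hat c^\dagger\hat c$ and $[\,\cdot\,,\hat c]=\kappa\hat c=-f^T(EK)\mathfrak{c}$ --- but it will prove harmless.

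Since $[\frac12\mathfrak{c}^T K\mathfrak{c},\,\cdot\,]$ is a derivation for ordinary operator multiplication (the identity $[A,BC]=[A,B]C+B[A,C]$ holds in any associative algebra, and the matrix $EK$ only modifies the coefficient vector of one factor without displacing it, so no extra fermionic signs intervene), the induction of Lemma \ref{lem:commWithSevOpBoson} carries over to give $[\frac12\mathfrak{c}^T K\mathfrak{c},\prod_l f_l^T\mathfrak{c}]=-\sum_i\prod_l f_l^T(EK)^{\delta_{il}}\mathfrak{c}$. Applying this twice reproduces Corollary \ref{cor:doubleComm} in the form $\sum_{p,i}\prod_l f_l^T(EK)^{\delta_{il}+\delta_{pl}}\mathfrak{c}$: the two applications contribute two factors of the sign, which cancel, so the double commutator is sign-free (equivalently, every summand is quadratic in $EK$ and is therefore invariant under $EK\mapsto-EK$). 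Lemma \ref{eq:conjGen} needs no modification, as it rests only on the GKSL structure, cyclicity of the trace, and $C_j=C_j^\dagger$; the last holds because $\mathfrak{c}^\dagger=E\mathfrak{c}$ turns $K_j=-\tilde K_j$ into exactly the Hermiticity condition.

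With these pieces the theorem follows as in the bosonic case: Lemma \ref{eq:conjGen} yields $\frac{d}{dt}\langle X\rangle_t=-\frac12\sum_j\langle[C_j,[C_j,X]]\rangle_t$, and taking $X=\prod_l f_l^T\mathfrak{c}$, substituting the fermionic corollary, and using the arbitrariness of the $f_l$ gives the closed linear system $\frac{d}{dt}\langle\otimes_l\mathfrak{c}\rangle_t=-\frac12\sum_j\sum_{i,p}\bigl(\otimes_l(EK_j)^{\delta_{il}+\delta_{pl}}\bigr)\langle\otimes_l\mathfrak{c}\rangle_t$, whose solution is \eqref{eq:mometsDynFer}. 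I expect the only real obstacle to be the elementary commutator lemma: correctly combining the anticommutator Leibniz rule with the antisymmetry of $K$, and then checking that the sign it produces enters quadratically and hence disappears from the moment dynamics.
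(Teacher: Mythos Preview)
Your proposal is correct and is exactly the approach the paper takes: the paper simply states that the fermionic proofs ``are almost identical to those of the bosonic case'' and omits them, so your plan to rerun Lemmas \ref{lem:commBoson}--\ref{eq:conjGen} with $J\to E$ and $K^T=K\to K^T=-K$ is precisely what is intended. Your explicit tracking of the extra minus sign in the single commutator $[\tfrac12\mathfrak{c}^TK\mathfrak{c},f^T\mathfrak{c}]=-f^T(EK)\mathfrak{c}$, together with the observation that it enters twice in the double commutator and hence cancels, is a detail the paper does not spell out but which is indeed the only place where the fermionic calculation differs nontrivially.
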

    
    In the following theorem we will consider Gaussian states similar to \eqref{eq:GaussStBos}, i.e. having the form
    \begin{equation}\label{eq:GaussStFer}
	\rho^{\rm (st)} = \exp \biggl(\frac12 \mathfrak{c}^{T} M \mathfrak{c} + s\biggr),
    \end{equation}
    where as in \cite[Sec.~4.3]{Ter19R} $ M = -M^T =-\tilde{M} $ and $ e^{-s} = \sqrt{\det(e^{KE} + I)} $.

    \begin{theorem}\label{th:statStatesFer}
	A density matrix $\rho^{\rm (st)}$ is a stationary solution of equation \eqref{eq:mainEq}, where $ C_{j} = \frac{1}{2}\mathfrak{c}^{T}K_{j}\mathfrak{c}$, $K_{j} = -K_{j}^{T} =- \tilde{K_{j}} \in \mathbb{C}^{2n \times 2n} $, of form \eqref{eq:GaussStFer} if and only if
	\begin{equation}\label{eq:statCondFer}
		[K_jE , e^{ME}] = 0
	\end{equation}
	for all $ j = 1, \ldots, N $.
    \end{theorem}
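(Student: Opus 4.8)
The plan is to mirror the proof of theorem \ref{th:statStatesBos} essentially line by line, systematically replacing the structure matrix $J$ by $E$, since for the canonical anticommutation relations the role played by $J$ in the bosonic commutator lemma \ref{lem:commBoson} is taken over by $E$. The one external ingredient that must be restated is the fermionic analog of lemma \ref{lem:expSand}, i.e. the fermionic case of lemma 2 from \cite{Ter16},
\[
e^{\frac12 \mathfrak{c}^{T} M \mathfrak{c}} \, \tfrac12 \mathfrak{c}^{T} K \mathfrak{c} \, e^{-\frac12 \mathfrak{c}^{T} M \mathfrak{c}} = \tfrac12 \mathfrak{c}^{T} e^{-ME} K e^{EM} \mathfrak{c},
\]
valid for $K = -K^{T}$ and $M = -M^{T}$. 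I would use this together with the fact, guaranteed by the constraints $K_{j} = -K_{j}^{T} = -\tilde{K_{j}}$ and $M = -M^{T} = -\tilde{M}$ (see \cite[Sec.~4.3]{Ter19R}), that each $C_{j}$ is self-adjoint and that $\rho^{\rm (st)}$ is an invertible positive density matrix.

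First I would note that stationarity of $\rho^{\rm (st)}$ is equivalent to $\sum_{j}[C_{j},[C_{j},\rho^{\rm (st)}]] = 0$. Pairing this with $\rho^{\rm (st)}$ under the trace and using cyclicity exactly as in the bosonic argument gives
\[
0 = \operatorname{tr}\rho^{\rm (st)}\sum_{j}[C_{j},[C_{j},\rho^{\rm (st)}]] = \sum_{j}\operatorname{tr}Z_{j}^{\dagger}Z_{j}, \qquad Z_{j} = [C_{j},\rho^{\rm (st)}],
\]
where the self-adjointness of $C_{j}$ and of $\rho^{\rm (st)}$ is precisely what identifies $[\rho^{\rm (st)},C_{j}]$ with $Z_{j}^{\dagger}$. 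Positivity of each summand forces $Z_{j}=0$, i.e. $[C_{j},\rho^{\rm (st)}]=0$; conversely $[C_{j},\rho^{\rm (st)}]=0$ trivially makes $\mathcal{L}(\rho^{\rm (st)})=0$, so the two conditions are equivalent, which is what eventually yields the ``if and only if''.

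From $[C_{j},\rho^{\rm (st)}]=0$ I would write $C_{j} = \rho^{\rm (st)}C_{j}(\rho^{\rm (st)})^{-1}$; since $\rho^{\rm (st)} = e^{s}e^{\frac12\mathfrak{c}^{T}M\mathfrak{c}}$ the scalar factor $e^{s}$ cancels, and the fermionic expSand identity converts this into $\tfrac12\mathfrak{c}^{T}e^{-ME}K_{j}e^{EM}\mathfrak{c} = \tfrac12\mathfrak{c}^{T}K_{j}\mathfrak{c}$. Because both $e^{-ME}K_{j}e^{EM}$ and $K_{j}$ are antisymmetric, uniqueness of the antisymmetric matrix representing a fermionic quadratic form gives $e^{-ME}K_{j}e^{EM} = K_{j}$. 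Multiplying on the right by $E$ and using $e^{EM}E = E e^{ME}$ (immediate from $E^{2}=I_{2n}$) turns this into $e^{-ME}K_{j}E\,e^{ME} = K_{j}E$, i.e. $[K_{j}E,e^{ME}]=0$, which is \eqref{eq:statCondFer}.

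The main obstacle is not the algebra, which is routine once $J$ is replaced by $E$, but the bookkeeping of the fermionic sign conventions: one must verify that $K_{j}=-K_{j}^{T}=-\tilde{K_{j}}$ and $M=-M^{T}=-\tilde{M}$ genuinely make $C_{j}$ self-adjoint and $\rho^{\rm (st)}$ a positive invertible operator (so that both the positivity argument and the conjugation step are legitimate), and that $e^{-ME}K_{j}e^{EM}$ remains antisymmetric, so that equating quadratic forms is equivalent to equating the underlying matrices. These are exactly the points where the fermionic case could in principle deviate from the bosonic one, and they are settled by the representation-theoretic facts collected in \cite{Ter19R, Ter16, Ter17}.
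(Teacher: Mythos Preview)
Your proposal is correct and follows exactly the approach the paper intends: the paper does not give a separate proof of theorem~\ref{th:statStatesFer} but states that it is ``almost identical'' to that of theorem~\ref{th:statStatesBos}, and your line-by-line translation with $J$ replaced by $E$ (together with the fermionic version of lemma~\ref{lem:expSand}) is precisely that argument. Your additional remarks on antisymmetry and sign conventions only make explicit the bookkeeping the paper leaves implicit.
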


	\section{Conclusions}
	
	Let us stress that due to the Levy-Khintchine theorem for Levy fields \cite{Sato2001} general case of GKSL equations derived by averaging the unitary dynamics on Levy fields is completely described by the combination of the diffusion generator, studied in the present paper, the Poisson generator, investigated in \cite{Teretenkov20, NosTer20} and the deterministic push. Hence, the combination of these results provides us with the dynamics of moments for such a general case. 
	
	Let us note the recent success in the study of properties of GKSL equations with quadratic generators \cite{Agredo2021a, Agredo2021b, Barthel2021}. The fact that in this paper the exact dynamics and the stationary states for a higher order GKSL equation were derived makes it possible to conclude that many of these recent results can be applied to these equations as well.

\end{document}